\newtheorem{thm}{Theorem}
\newtheorem{examp}{Example}
\begin{document}
	\vspace{-0.8cm}
\title{On Multi-Server Coded Caching\\ in the Low Memory Regime}

\author{
	\IEEEauthorblockN{Seyed Pooya Shariatpanahi$^\ast$, Babak Hossein Khalaj$^\dagger$} \\ \vspace{+5mm}
\IEEEauthorblockA{
    $\ast$ School of Computer Science, \\Institute for Research in Fundamental Sciences (IPM), Tehran, Iran \\
    $\dagger$ Department of Electrical Engineering,\\
    Sharif University of Technology, Tehran, Iran.
    \\
    \texttt{pooya@ipm.ir, khalaj@sharif.edu}}

\thanks{
	This research was in part supported by a grant from IPM}
}

\maketitle
\begin{abstract}
	In this paper we determine the delivery time for a multi-server coded caching problem when the cache size of each user is small.We propose an achievable scheme based on coded cache content placement, and employ zero-forcing techniques at the content delivery phase. Surprisingly, in contrast to previous multi-server results which were proved to be order-optimal within a multiplicative factor of $2$, for the low memory regime we prove that our achievable scheme is optimal. Moreover, we compare the performance of our scheme with the uncoded solution, and show our proposal's improvement over the uncoded scheme. Our results also apply to Degrees-of-Freedom (DoF) analysis of Multiple-Input Single-Output Broadcast Channels (MISO-BC) with cache-enabled users, where the multiple-antenna transmitter replaces the role of multiple servers. This shows that interference management in the low memory regime needs different caching techniques compared with medium/high memory regimes discussed in previous works.
\end{abstract}

\section{Introduction}
Caching content during network off-peak hours to relieve congestion at network high-peak hours is a well-investigated technique in the literature of content delivery networks both in wired networks (\cite{Kangasharju2002}, \cite{Borst2010}) and in wireless settings (\cite{Gitzenis2012}, \cite{Shariatpanahi2015}). Coded caching \cite{MaddahAli2014}, which has been proposed in the context of information theoretic analysis of caching networks, can be considered as a paradigm shift in this direction by providing multicasting gains (proportional to the total storage available in the network) to users with distinct demands. This approach is shown to provide substantial gains in different scenarios such as hierarchical networks \cite{Karamchandi2016}, on-line coded caching \cite{Pedarsani2016}, and D2D networks \cite{Ji2016}. 

An important line of research in the framework of coded caching is to investigate how one can use multiple transmitters to boost the coded caching scheme performance. This problem has been considered in the context of wired networks under the name of \emph{multi-server coded caching}  \cite{Shariatpanahi2016}, \cite{Lampiris2018}, \cite{Mital2018}, \cite{Cheng2018}, and in the context of wireless networks under the names of \emph{MISO-BC networks} \cite{Zhang2017}, \cite{Shariatpanahi2017}, \emph{cache-enabled interference networks} \cite{Naderalizadeh2017}, \cite{Cao}, \cite{Tahmasbi2017}, and \emph{multi-antenna coded caching} \cite{Shariatpanahi2017ISIT}. The interesting result in \cite{Shariatpanahi2016} (and follow-up works) shows that with multiple transmitters the multiplexing gain offered by the transmitters and the multicasting gain of coded caching are \emph{additive}, which is applicable to all the above wired and wireless multi-transmitter setups. This encouraging result suggests that using multiple transmitters along with coded caching techniques will guarantee high data rates needed for future wireless content delivery applications.

In this paper we consider a multi-server coded caching setup where, in contrast to previous works, the cache size of each user is much smaller than the size of a single file. In this regime the cache content placement scheme used is storing a linear combination of sub-files in users' caches. The delivery phase would benefit from zero-forcing techniques. Interestingly, we show that this strategy is optimal, by presenting a matching converse proof for the delivery time. It should be noted that our paper can be considered as a generalization of the work \cite{Chen2014}, which considers a low memory regime in a single-server setup, to the multiple-server setup. The structure of the paper is as follows. In Section \ref{Sec_SystemModel} we present the problem setup. In section \ref{Sec_LowMemory} we consider the problem in the low memory regime which contains two subsections, each investigating a different regime for the number of antennas. Finally, section \ref{Sec_Conclusions} concludes the paper.

\section{System Model}\label{Sec_SystemModel}

We consider $L$ transmitters sending data to $K$ cache-enabled users via a \emph{Linear Network}. All the transmitters are assumed to have access to a library of $N$ files $\mathcal{W}=\{W_1,\dots,W_N\}$, each of $F$ bits.  This is a general model which covers a wired network setup where a single server is connected to an intermediate network with $L$ unit-capacity links (or equivalently $L$ servers each with a  unit-capacity links), and each user is connected to the network with one unit-capacity link. Moreover, internal nodes do random linear network coding resulting in a linear network (see \cite{Shariatpanahi2016}). Alternatively, this model covers a wireless Multi-Input Single-Output Broadcast (MISO-BC) setup, where a multi-antenna base station with $L$ antennas delivers content to $K$ single-antenna users (see \cite{Zhang2017}). 

Based on the \emph{Linear Network} assumption mentioned above, if the transmit vector in time slot $i$ is $\mathbf{x}(i)$, the received signal at user $k$ will be:
\begin{equation}
y_k(i)=\mathbf{h}_k^{H} \mathbf{x}(i), \quad k=1,\dots,K
\end{equation}
where $\mathbf{h}_k$ is the channel vector from the $L$ transmitters to the user $k$. In this paper we omit the time slot index whenever it is clear from the context.

The network operates in two phase, namely, the \emph{Cache Content Placement} and the \emph{Content Delivery} phases. In the first phase, which is assumed to happen during network low-peak hours, the users' caches are filled with data from the library. More specifically, we denote the cache content of user $k$ as $Z_k$ to be a function of the library $\mathcal {W}$, which should have an entropy less than $MF$ bits. It should be noted that this phase is operated without knowing users content requests in the content delivery phase. In the second phase, which is assumed to occur during network high-peak hours, each user requests a content from the library, denoted collectively by the indexes vector $\mathbf{d}=\{d_1,\dots,d_K\}$, where $d_k \in [N]$ denotes the request index of user $k$. In order to assume the worst case request vector $\mathbf{d}$, and remove any non-coded multicasting opportunities, we assume that all the users request distinct files from the library (i.e., $d_i \neq d_j$ if $i\neq j$). According to these requests, the transmitters collaboratively send a space-time block code $\mathbf{X}(\mathbf{d})$ of size $L \times T$, such that each user can decode its requested file with the help of its received signal at the second phase, along with its cache contents acquired in the first phase. We define the \emph{Delivery Time} $T$ to be the number of network/channel uses needed to transmit $\mathbf{X}$ as the performance metric for the caching schemes.

\section{The Low Memory Regime}\label{Sec_LowMemory}

In this section we consider the performance of network in the regime of low memory. More specifically, we assume $K=N$ and $M=1/N$. Thus, each user can cache only a fraction of each file. In the first subsection we assume $L=N-1$, and propose a scheme which achieves the optimal performance. Then in the next subsection we investigate the case of $L<N-1$.

\subsection{Problem Parameters: $K=N$ $M=1/N$, $L=N-1$}\label{SubSec_LowMemory1}

Let us begin explaining the main idea via an example\footnote{It should be noted that the example of $K=N=3$, $M=1/3$, and $L=2$ is investigated in \cite{Shariatpanahi2016}.}:

\begin{examp}
	In this example we assume $L=3$ transmitters, $K=N=4$ receivers and files, and $M=\frac{1}{4}$. Let us denote the files as $A$, $B$, $C$, and $D$. In the cache content placement the users' caches are filled as follows:
	\begin{eqnarray}
	Z_1&=&\{A_1+B_1+C_1+D_1\} \\ \nonumber
	Z_2&=&\{A_2+B_2+C_2+D_2\} \\ \nonumber
	Z_3&=&\{A_3+B_3+C_3+D_3\} \\ \nonumber
	Z_4&=&\{A_4+B_4+C_4+D_4\} 
	\end{eqnarray}
	Suppose in the second phase the first, second, third, and fourth users request files $A$, $B$, $C$, and $D$ respectively. The signal transmitted by the transmitters will be:
	\begin{eqnarray}
		\mathbf{X}_1&=&B_1 \frac{\mathbf{w}_2^{\{2,3,4\}}}{\mathbf{h}_1^{H}\mathbf{w}_2^{\{2,3,4\}}}+C_1 \frac{\mathbf{w}_3^{\{2,3,4\}}}{\mathbf{h}_1^{H}\mathbf{w}_3^{\{2,3,4\}}}+D_1 \frac{\mathbf{w}_4^{\{2,3,4\}}}{\mathbf{h}_1^{H}\mathbf{w}_4^{\{2,3,4\}}} \\ \nonumber
		\mathbf{X}_2&=&A_2 \frac{\mathbf{w}_1^{\{1,3,4\}}}{\mathbf{h}_2^{H}\mathbf{w}_1^{\{1,3,4\}}}+C_2 \frac{\mathbf{w}_3^{\{1,3,4\}}}{\mathbf{h}_2^{H}\mathbf{w}_3^{\{1,3,4\}}}+D_2 \frac{\mathbf{w}_4^{\{1,3,4\}}}{\mathbf{h}_2^{H}\mathbf{w}_4^{\{1,3,4\}}} \\ \nonumber
		\mathbf{X}_3&=&A_3 \frac{\mathbf{w}_1^{\{1,2,4\}}}{\mathbf{h}_3^{H}\mathbf{w}_1^{\{1,2,4\}}}+B_3 \frac{\mathbf{w}_2^{\{1,2,4\}}}{\mathbf{h}_3^{H}\mathbf{w}_2^{\{1,2,4\}}}+D_3 \frac{\mathbf{w}_4^{\{1,2,4\}}}{\mathbf{h}_3^{H}\mathbf{w}_4^{\{1,2,4\}}} \\ \nonumber
		\mathbf{X}_4&=&A_4 \frac{\mathbf{w}_1^{\{1,2,3\}}}{\mathbf{h}_4^{H}\mathbf{w}_1^{\{1,2,3\}}}+B_4 \frac{\mathbf{w}_2^{\{1,2,3\}}}{\mathbf{h}_4^{H}\mathbf{w}_2^{\{1,2,3\}}}+C_4 \frac{\mathbf{w}_3^{\{1,2,3\}}}{\mathbf{h}_4^{H}\mathbf{w}_3^{\{1,2,3\}}} 
	\end{eqnarray}
	The unit-size vectors $\mathbf{w}^{S}_i$ are chosen such that $\mathbf{h}_j^H \mathbf{w}^{S}_i =0$ for all $j \in S \backslash \{i\}$. 
	Let us focus on the received signals by all the users after transmission of $\mathbf{X}_1$:
	\begin{eqnarray}
	&&\mathbf{h}_2^H\mathbf{X}_1=B_1 \frac{\mathbf{h}_2^H\mathbf{w}_2^{\{2,3,4\}}}{\mathbf{h}_1^{H}\mathbf{w}_2^{\{2,3,4\}}} \quad
	\mathbf{h}_3^H\mathbf{X}_1=C_1 \frac{\mathbf{h}_3^H\mathbf{w}_3^{\{2,3,4\}}}{\mathbf{h}_1^{H}\mathbf{w}_3^{\{2,3,4\}}} \quad
	\mathbf{h}_4^H\mathbf{X}_1=D_1 \frac{\mathbf{h}_4^H\mathbf{w}_4^{\{2,3,4\}}}{\mathbf{h}_1^{H}\mathbf{w}_4^{\{2,3,4\}}} \\ \nonumber
	&&\mathbf{h}_1^H\mathbf{X}_1=B_1 \frac{\mathbf{h}_1^H\mathbf{w}_2^{\{2,3,4\}}}{\mathbf{h}_1^{H}\mathbf{w}_2^{\{2,3,4\}}}+C_1 \frac{\mathbf{h}_1^H\mathbf{w}_3^{\{2,3,4\}}}{\mathbf{h}_1^{H}\mathbf{w}_3^{\{2,3,4\}}}+D_1 \frac{\mathbf{h}_1^H\mathbf{w}_4^{\{2,3,4\}}}{\mathbf{h}_1^{H}\mathbf{w}_4^{\{2,3,4\}}} =B_1 + C_1 + D_1
	\end{eqnarray}
	By transmitting $\mathbf{X}_2$ we will have:
	\begin{eqnarray}
	&&\mathbf{h}_1^H\mathbf{X}_2=A_2 \frac{\mathbf{h}_1^H\mathbf{w}_1^{\{1,3,4\}}}{\mathbf{h}_2^{H}\mathbf{w}_1^{\{1,3,4\}}} \quad
	\mathbf{h}_3^H\mathbf{X}_2=C_2 \frac{\mathbf{h}_3^H\mathbf{w}_3^{\{1,3,4\}}}{\mathbf{h}_2^{H}\mathbf{w}_3^{\{1,3,4\}}} \quad
	\mathbf{h}_4^H\mathbf{X}_2=D_2 \frac{\mathbf{h}_4^H\mathbf{w}_4^{\{1,3,4\}}}{\mathbf{h}_2^{H}\mathbf{w}_4^{\{1,3,4\}}} \\ \nonumber
	&&\mathbf{h}_2^H\mathbf{X}_2=A_2 \frac{\mathbf{h}_2^H\mathbf{w}_1^{\{1,3,4\}}}{\mathbf{h}_2^{H}\mathbf{w}_1^{\{1,3,4\}}}+C_2 \frac{\mathbf{h}_2^H\mathbf{w}_3^{\{1,3,4\}}}{\mathbf{h}_2^{H}\mathbf{w}_3^{\{1,3,4\}}}+D_2 \frac{\mathbf{h}_2^H\mathbf{w}_4^{\{1,3,4\}}}{\mathbf{h}_2^{H}\mathbf{w}_4^{\{1,3,4\}}} = A_2 + C_2 + D_2
	\end{eqnarray}
	By transmitting $\mathbf{X}_3$ we will have:
	\begin{eqnarray}
	&&\mathbf{h}_1^H\mathbf{X}_3=A_3 \frac{\mathbf{h}_1^H\mathbf{w}_1^{\{1,2,4\}}}{\mathbf{h}_3^{H}\mathbf{w}_1^{\{1,2,4\}}} \quad
	\mathbf{h}_2^H\mathbf{X}_3=B_3 \frac{\mathbf{h}_2^H\mathbf{w}_2^{\{1,2,4\}}}{\mathbf{h}_3^{H}\mathbf{w}_2^{\{1,2,4\}}} \quad
	\mathbf{h}_4^H\mathbf{X}_3=D_3 \frac{\mathbf{h}_4^H\mathbf{w}_4^{\{1,2,4\}}}{\mathbf{h}_3^{H}\mathbf{w}_4^{\{1,2,4\}}} \\ \nonumber
	&&\mathbf{h}_3^H\mathbf{X}_3=A_3 \frac{\mathbf{h}_3^H\mathbf{w}_1^{\{1,2,4\}}}{\mathbf{h}_3^{H}\mathbf{w}_1^{\{1,2,4\}}}+B_3 \frac{\mathbf{h}_3^H\mathbf{w}_2^{\{1,2,4\}}}{\mathbf{h}_3^{H}\mathbf{w}_2^{\{1,2,4\}}}+D_3 \frac{\mathbf{h}_3^H\mathbf{w}_4^{\{1,2,4\}}}{\mathbf{h}_3^{H}\mathbf{w}_4^{\{1,2,4\}}} = A_3 + B_3 + D_3
	\end{eqnarray}
	And finally by transmitting $\mathbf{X}_4$ we have:
		\begin{eqnarray}
	&&\mathbf{h}_1^H\mathbf{X}_4=A_4 \frac{	\mathbf{h}_1^H\mathbf{w}_1^{\{1,2,3\}}}{\mathbf{h}_4^{H}\mathbf{w}_1^{\{1,2,3\}}} ,\quad
	\mathbf{h}_2^H\mathbf{X}_4=B_4 \frac{\mathbf{h}_2^H\mathbf{w}_2^{\{1,2,3\}}}{\mathbf{h}_4^{H}\mathbf{w}_2^{\{1,2,3\}}} ,\quad
	\mathbf{h}_3^H\mathbf{X}_4=C_4 \frac{\mathbf{h}_3^H\mathbf{w}_3^{\{1,2,3\}}}{\mathbf{h}_4^{H}\mathbf{w}_3^{\{1,2,3\}}} \\ \nonumber
	&&\mathbf{h}_4^H\mathbf{X}_4=A_4 \frac{\mathbf{h}_4^H\mathbf{w}_1^{\{1,2,3\}}}{\mathbf{h}_4^{H}\mathbf{w}_1^{\{1,2,3\}}}+B_4 \frac{\mathbf{h}_4^H\mathbf{w}_2^{\{1,2,3\}}}{\mathbf{h}_4^{H}\mathbf{w}_2^{\{1,2,3\}}}+C_4 \frac{\mathbf{h}_4^H\mathbf{w}_3^{\{1,2,3\}}}{\mathbf{h}_4^{H}\mathbf{w}_3^{\{1,2,3\}}} =A_4+B_4+C_4
	\end{eqnarray}
By collecting all the decoded sub-files we arrive at the below table which shows the decoded data by each user following each transmission. We call this table as the \emph{Delivery Table} for this problem.
\begin{center}
	\begin{tabular}{ |c| c | c | c | c | c | c |}
		\hline
		Row & Signal & User 1 & User 2 & User 3 & User 4 & Time Slot\\ \hline
		1 & $\mathbf{X}_1$ & $B_1+C_1+D_1$ & $B_1$ & $C_1$ & $D_1$ & $\frac{1}{4}$ \\ \hline
		2 & $\mathbf{X}_2$ &  $A_2$ & $A_2+C_2+D_2$ & $C_2$ & $D_2$ & $\frac{1}{4}$ \\ \hline
		3 & $\mathbf{X}_3$ & $A_3$ & $B_3$ & $A_3+B_3+D_3$ & $D_3$ & $\frac{1}{4}$ \\
		\hline
		4 & $\mathbf{X}_4$ & $A_4$ & $B_4$ & $C_4$& $A_4+B_4+C_4$ & $\frac{1}{4}$ \\
		\hline
	\end{tabular}
\end{center}
Then, it is clear that each user can decode its requested file with the help of its cache contents. Since each row in the delivery table takes $1/N=1/4$ time slots, sending the transmit blocks $\mathbf{X}_1$, $\mathbf{X}_2$, $\mathbf{X}_3$, and $\mathbf{X}_4$ will result in the Delivery Time of 
\begin{equation}
T=4\times \frac{1}{4} =1
\end{equation}
Now, following Lemma 1 in \cite{Shariatpanahi2016} we have the following lower bound on the delivery time
\begin{eqnarray}
T &\geq& \max_{s \in \{1,\dots,K\}} \frac{1}{\min(s,L)} \left(s-\frac{s}{\lfloor N/s \rfloor }M\right) \\ \nonumber
&\geq&\frac{1}{L}\left(K-\frac{K}{\lfloor N/K \rfloor }M\right) \\ \nonumber
&=&1
\end{eqnarray}
which shows that the above achievable scheme is optimal.	

The above delivery delay of $T=1$ should be compared to the uncoded scheme in which every user caches $M/N$ fraction of each file. Thus, by applying the classical Zero-Forcing and forming $L$ parallel streams, the Delivery Time will be
\begin{eqnarray}
 T&=&\frac{K(1-M/N)}{L} \\ \nonumber
 &=&\frac{5}{4}
\end{eqnarray}
which shows that the optimal proposed scheme will result in $\frac{1}{4}$ time slots less delay.
\end{examp}

As we see next, the same concept of Delivery Table can be extended to other examples as well.

\begin{examp}
	In this example we assume $L=4$ transmitters, $K=N=5$ receivers and files, and $M=\frac{1}{4}$. Let us denote the files as $A$, $B$, $C$,  $D$, and $E$, and suppose the users require them respectively. In the cache content placement the users' caches are filled as follows:
	\begin{eqnarray}
	Z_i=\{A_i+B_i+C_i+D_i+E_i\} 
	\end{eqnarray}
	for $i=1,\dots,5$.	Along the same guidelines provided in Example 1 one can arrive at the following delivery table for this example. 
	\begin{center} \small
	\begin{tabular}{ | c | c | c | c | c | c |}
		\hline
		Signal & User 1 & User 2 & User 3 & User 4 & User 5 \\ \hline
		$\mathbf{X}_1$ & $B_1+C_1+D_1+E_1$ & $B_1$ & $C_1$ & $D_1$ & $E_1$  \\ \hline
		$\mathbf{X}_2$ & $A_2$ & $A_2+C_2+D_2+E_2$ & $C_2$ & $D_2$ & $E_2$  \\ \hline
		$\mathbf{X}_3$ & $A_3$ & $B_3$ & $A_3+B_3+D_3+E_3$ & $D_3$ & $E_3$  \\ \hline
		$\mathbf{X}_4$ & $A_4$ & $B_4$ & $C_4$ & $A_4+B_4+C_4+E_4$ & $E_4$  \\ \hline
		$\mathbf{X}_5$ & $A_5$ & $B_5$ & $C_5$ & $D_5$ & $A_5+B_5+C_5+D_5$  \\ \hline
		
	\end{tabular}
\end{center}
Then one can easily arrive at the delivery time of $T=5 \times \frac{1}{5}=1$ which is optimal. The Delivery Time for the uncoded scheme will be
\begin{eqnarray}
T&=&\frac{K(1-M/N)}{L} \\ \nonumber
&=&\frac{6}{5}
\end{eqnarray}
which shows that the optimal proposed scheme will result in $\frac{1}{5}$ time slots less delay.
\end{examp}

The following theorem generalizes the above examples.
\begin{thm}
	Suppose $K=N$, $L=N-1$, and $M=\frac{1}{N}$. Then, the optimal delivery time is $T=1$.
\end{thm}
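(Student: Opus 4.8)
The plan is to prove the two matching bounds $T\le 1$ (achievability) and $T\ge 1$ (converse); the theorem is then immediate. The lower bound costs almost no work: I would invoke the cut-set style bound of Lemma~1 in \cite{Shariatpanahi2016}, already used in Examples~1 and~2,
\[
T\ \ge\ \max_{s\in\{1,\dots,K\}}\ \frac{1}{\min(s,L)}\left(s-\frac{s}{\lfloor N/s\rfloor}M\right),
\]
and evaluate it at $s=K=N$. Since $\min(N,N-1)=L$ and $\lfloor N/N\rfloor=1$, the right-hand side becomes $\frac{1}{L}\left(K-KM\right)=\frac{1}{N-1}\left(N-N\cdot\frac{1}{N}\right)=1$, so $T\ge 1$.

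For achievability I would generalize the Delivery Table of the two examples to arbitrary $N$. Partition every library file $W_n$ into $N$ equal sub-files $W_n^{(1)},\dots,W_n^{(N)}$ of $F/N$ bits, and, taking without loss of generality the demand vector to be $d_k=k$, set the cache of user $k$ to the single coded symbol $Z_k=\{\sum_{n=1}^{N} W_n^{(k)}\}$, which has size $F/N=MF$ bits. In the delivery phase I would send $N$ consecutive blocks $\mathbf{X}_1,\dots,\mathbf{X}_K$, with
\[
\mathbf{X}_k=\sum_{j\in[K]\setminus\{k\}} W_j^{(k)}\,\frac{\mathbf{w}_j^{[K]\setminus\{k\}}}{\mathbf{h}_k^{H}\mathbf{w}_j^{[K]\setminus\{k\}}},
\]
where the unit vector $\mathbf{w}_j^{[K]\setminus\{k\}}\in\mathbb{C}^{L}$ satisfies $\mathbf{h}_i^{H}\mathbf{w}_j^{[K]\setminus\{k\}}=0$ for all $i\in[K]\setminus\{k,j\}$; these are $N-2=L-1$ homogeneous linear constraints in $\mathbb{C}^{L}$, so a nonzero such vector exists. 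Each block carries $F/N$ bits and therefore costs $1/N$ time units, giving total delivery time $T=N\cdot\frac{1}{N}=1$.

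It remains to verify decodability from this Delivery Table. By the zero-forcing property, transmitting $\mathbf{X}_k$ delivers to every user $j\neq k$ a nonzero scalar multiple of $W_j^{(k)}$, hence $W_j^{(k)}$ after rescaling, while user $k$ itself receives exactly $\sum_{j\neq k} W_j^{(k)}$. Consequently, after all $N$ transmissions user $k$ has obtained $W_k^{(j)}$ for every $j\neq k$ directly, and recovers the last missing sub-file by cancellation, $W_k^{(k)}=Z_k-\sum_{j\neq k} W_j^{(k)}$, using its cached symbol $Z_k$ together with the sum decoded from $\mathbf{X}_k$. Thus every user reconstructs its whole requested file, and the scheme is valid; combined with the converse this gives $T=1$.

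The step I expect to be the main obstacle is not the combinatorics but justifying the existence and usability of the beamformers: one must ensure the channel vectors $\mathbf{h}_1,\dots,\mathbf{h}_K$ are in general position, so that each $\mathbf{w}_j^{[K]\setminus\{k\}}$ is well defined and, crucially, so that the scalars $\mathbf{h}_k^{H}\mathbf{w}_j^{[K]\setminus\{k\}}$ and $\mathbf{h}_j^{H}\mathbf{w}_j^{[K]\setminus\{k\}}$ are nonzero (the former normalizes $\mathbf{X}_k$, the latter lets user $j$ rescale its observation). This holds almost surely for random MISO channels and can be arranged by a suitable choice of network-coding coefficients in the wired linear-network model, exactly as in \cite{Shariatpanahi2016}; I would state this genericity assumption explicitly. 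A minor bookkeeping point worth spelling out is that the uniform sub-file size $F/N$ makes every block cost precisely $1/N$, so the $N$ blocks sum to $T=1$ with no remainder.
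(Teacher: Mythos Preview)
Your proposal is correct and follows essentially the same approach as the paper: the paper presents exactly this cache placement $Z_k=\sum_n W_n^k$, the same zero-forcing delivery blocks $\mathbf{X}_i=\sum_{k\neq i} W_{d_k}^i\,\mathbf{w}_k^{[K]\setminus\{i\}}/(\mathbf{h}_i^H\mathbf{w}_k^{[K]\setminus\{i\}})$, the same decodability check, and the same converse via Lemma~1 of \cite{Shariatpanahi2016} at $s=K$. If anything, you are slightly more explicit than the paper about the dimension count for the beamformers and the genericity assumption on the channel vectors; the paper simply states the nonzero-inner-product condition holds ``with high probability.''
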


\begin{proof}
	 Let us present our achievable scheme in Algorithm I.

	\begin{algorithm}[t]
		\caption{Multi-Server Coded Caching for Small Cache Size\label{Alg_Main}}
		\begin{algorithmic}[1]
			\Procedure{CACHE-PLACEMENT}{$W_1,\dots,W_N$}
			
			\ForAll{$n=1,\dots,N$}
			\State	$W_n=\{W_n^i\}$ for $i=1,\dots,N$
			\EndFor
			
			\ForAll{$k=1,\dots,K$}
			\State $Z_k=\sum_{n=1}^{N} W_{n}^k$
			\EndFor
			
			\EndProcedure
			\Procedure{CONTENT-DELIVERY}{$W_1,\dots,W_N$, $d_1,\dots,d_K$, $\mathbf{H} = [\mathbf{h}_1, \ldots, \mathbf{h}_K]$}
			\ForAll{$i=1,\dots,K$}
			\State $\mathbf{X}_i \leftarrow \sum_{k=1, k\neq i}^K W_{d_k}^i \frac{\mathbf{w}_k^{[K]\backslash\{i\}}}{\mathbf{h}_i^H\mathbf{w}_k^{[K]\backslash\{i\}}} $ where $\mathbf{h}_j^H\mathbf{w}^S_k=0$ for all $ j\in S \backslash \{k\}$ 
			\State Transmit $\mathbf{X}_i$
			
			\EndFor
			
			\EndProcedure
			
		\end{algorithmic}
	\end{algorithm}
	Next we show that Algorithm 1 delivers all the desired requests to the users correctly. Let us focus on an arbitrary user $j$ which has requested the file $W_{d_j}$. Upon transmission of $\mathbf{X}_i$ for $i\neq j$ this user receives
	\begin{eqnarray}
	\mathbf{h}_j^H \mathbf{X}_i&=&\sum_{k=1, k\neq i}^K W_{d_k}^i \frac{\mathbf{h}_j^H\mathbf{w}_k^{[K]\backslash\{i\}}}{\mathbf{h}_i^H\mathbf{w}_k^{[K]\backslash\{i\}}} \\ \nonumber
	&=& W_{d_j}^i \frac{\mathbf{h}_j^H\mathbf{w}_j^{[K]\backslash\{i\}}}{\mathbf{h}_i^H\mathbf{w}_j^{[K]\backslash\{i\}}}
	\end{eqnarray}
since $\mathbf{h}_j^H\mathbf{w}_j^{[K]\backslash\{i\}} \neq 0$ and $\mathbf{h}_i^H\mathbf{w}_j^{[K]\backslash\{i\}}\neq 0 $ with high probability, user $j$ can decode $W_{d_j}^i$ for all $i \in [N] \backslash \{j\}$. So for decoding the whole file it remains for this user to decode $W_{d_j}^j$.

Now let us focus on what this user receives after transmission of $\mathbf{X}_j$:
\begin{eqnarray}
\mathbf{h}_j^H \mathbf{X}_j&=&\sum_{k=1, k\neq j}^K W_{d_k}^j \frac{\mathbf{h}_j^H\mathbf{w}_k^{[K]\backslash\{j\}}}{\mathbf{h}_j^H\mathbf{w}_k^{[K]\backslash\{j\}}} \\ \nonumber
&=& \sum_{k=1, k\neq j}^K W_{d_k}^j 
\end{eqnarray}
by subtracting this from $Z_j$ we will have:
\begin{equation}
\sum_{n=1}^{N} W_{n}^j -\sum_{k=1, k\neq j}^K W_{d_k}^j =W_{d_j}^j
\end{equation}
which is the missing part. Thus user $j$ can decode $W_{d_j}$, and similarly, all the users can decode their requests. 

The Delivery Time of this achievable can be calculated as the number of transmit blocks $\mathbf{X}_i$, which is $N$, times the delivery time of each, which is $1/N$, resulting in $T=1$. 

Finally, following from  the converse Lemma 1 in \cite{Shariatpanahi2016} we have 
\begin{eqnarray}
T &\geq& \max_{s \in \{1,\dots,K\}} \frac{1}{\min(s,L)} \left(s-\frac{s}{\lfloor N/s \rfloor }M\right) \\ \nonumber
&\geq&\frac{1}{L}\left(K-\frac{K}{\lfloor N/K \rfloor }M\right) \\ \nonumber
&=&\frac{N-1}{L} \\ \nonumber
&=&1
\end{eqnarray}
which concludes the proof.

\end{proof}

In comparison with the uncoded scheme which arrives at the delivery time of
\begin{eqnarray}
T&=&\frac{K(1-M/N)}{L} \\ \nonumber
&=&1+\frac{1}{N}
\end{eqnarray}
we see $1/N$ time slots improvement in the delivery time.

\subsection{Problem Parameters: $K=N$ $M=1/N$, $L<N-1$}\label{SubSec_LowMemory2}

In the last subsection we observed that as long as we have $L=N-1$ antennas, each row of the delivery table can be delivered in one shot of length  $1/N$ time slots. However, when we have less antennas, delivery of each row is different. In each row of the delivery table the goal is to deliver $N-1$ individual messages to $N-1$ of the users and the sum of these messages to the remaining user. For example, in the first row of Example 1's delivery table there are three individual messages for the second, third, and the fourth users, and the sum of these messages should be delivered to the first user. As we have shown in the previous subsection, this is feasible if we have $L=N-1$ transmitters. Next, we explain how the achievable scheme changes if we have less antennas.

\begin{examp}
	The setup of this example is the same as Example 1 except that now we have $L=2$ antennas. Suppose the goal is to deliver $M_1$ to the user 1, $M_2$ to the user 2, $M_3$ to the user 3, and $M_1+M_2+M_3$ to the user 4. All $M_i$'s have the length of $1/N=1/4$, thus, with three transmitters we could fulfill this task in one shot of length $1/4$. However in order to do this with $L=2$ antennas first we need to further split each sub-file into two equal mini-files, i.e., $M_i=\{M_i^1,M_i^2\}, i=1,2,3$. Then, we send the following signals
	\begin{eqnarray}
	&& M_1^1 \frac{\mathbf{h}_2^{\perp}}{\mathbf{h}_4^H\mathbf{h}_2^{\perp}} + (M_2^1 + M_2^2) \frac{\mathbf{h}_1^{\perp}}{\mathbf{h}_4^H\mathbf{h}_1^{\perp}}\\ \nonumber
	&& M_2^2 \frac{\mathbf{h}_3^{\perp}}{\mathbf{h}_4^H\mathbf{h}_3^{\perp}} - M_3^1 \frac{\mathbf{h}_2^{\perp}}{\mathbf{h}_4^H\mathbf{h}_2^{\perp}}\\ \nonumber
	&& M_1^2 \frac{\mathbf{h}_3^{\perp}}{\mathbf{h}_4^H\mathbf{h}_3^{\perp}} + (M_3^1+M_3^2) \frac{\mathbf{h}_1^{\perp}}{\mathbf{h}_4^H\mathbf{h}_1^{\perp}}
	\end{eqnarray}
	It can be easily checked that the data different users receive are as summarized in the table below
		\begin{center}
		\begin{tabular}{  | c | c | c | c | c |}
			\hline
			 User 1 & User 2 & User 3 & User 4 & Time Slot\\ \hline
			$M_1^1$ & $M_2^1+M_2^2$ & - & $M_1^1+M_2^1+M_2^2$ & $\frac{1}{8}$ \\ \hline
			 - & $M_2^2$ & $-M_3^1$ & $M_2^2-M_3^1$ & $\frac{1}{8}$ \\ \hline
		 $M_1^2$ & - & $M_3^1+M_3^2$ & $M_1^2+M_3^1+M_3^2$ & $\frac{1}{8}$ \\
			\hline
			
		\end{tabular}
	\end{center}

Now it is clear that user 1 can decode $M_1=\{M_1^1,M_1^2\}$, user 2 can decode $M_2=\{M_2^1,M_2^2\}$, and user 3 can decode $M_3=\{M_3^1,M_3^2\}$. Also by subtracting the second row from the first row, user 4 can decode $M_1^1+M_2^1+M_3^1$. User 4 can also add the second and third rows to arrive at $M_1^2+M_2^2+M_3^2$. Thus, user 4 can arrive at $M_1+M_2+M_3=\{M_1^1+M_2^1+M_3^1,M_1^2+M_2^2+M_3^2\}$. The total time for finishing this task is $3 \times 1/8=3/8$ time slots achieved by $L=2$ in contrast to $1/4$ time slots achieved in Example 1 by $L=3$ antennas, i.e., a multiplicative factor of $3/2$ more time slots needed due to less transmitters available. Since all the rows in the delivery table of Example 1 can be treated similarly, the total time needed is now $T=\frac{3}{2}$.

On the other hand from the converse argument in Theorem 1 we know 
\begin{equation}
T \geq \frac{N-1}{L} = \frac{3}{2}
\end{equation}
which shows that the proposed scheme is optimal. The Delivery Time for the uncoded scheme will be
\begin{eqnarray}
T&=&\frac{K(1-M/N)}{L} \\ \nonumber
&=&\frac{15}{8}
\end{eqnarray}
which shows that the optimal proposed scheme will result in $\frac{3}{8}$ time slots less delay.

\end{examp}

\begin{examp}
	Here we revisit Example 2 with $L=3$ antennas. Each row in Example 1 consisted of delivering an independent message to one of the four users, and the sum of these messages to the remaining user. Assume we want to deliver $M_1$, $M_2$, $M_3$, and $M_4$ to the users 1, 2, 3, 4. In addition the message $M_1+M_2+M_3+M_4$ should be delivered to the user 5. Since we have $L=3$ transmitters, we can send three independent messages in parallel. Here we split each message into three equal parts, i.e., $M_i=\{M_i^1,M_i^2,M_i^3\}$. Then, the below table shows the coding strategy in this case 
	\begin{center} \small
		\begin{tabular}{  | c | c | c | c | c |}
			\hline
			 User 1 & User 2 & User 3 & User 4 & User 5 \\ \hline
			 $M_1^1-M_1^2+M_1^3$ & $M_2^1-M_2^2$ & $M_3^1$ & - & $M_1^1-M_1^2+M_1^3+M_2^1-M_2^2+M_3^1$  \\ \hline
			 $M_1^2-M_1^3$ & $M_2^2$ & - & $M_4^1$ & $M_1^2-M_1^3+M_2^2+M_4^1$  \\ \hline
			 $M_1^3$ & - & $M_3^2$ & $M_4^2-M_4^1$ & $M_1^3+M_3^2+M_4^2-M_4^1$  \\ \hline
			 - & $M_2^3$ & $M_3^3-M_3^2$ & $M_4^3-M_4^2+M_4^1$ & $M_2^3+M_3^3-M_3^2+M_4^3-M_4^2+M_4^1$  \\ \hline
		\end{tabular}
	\end{center}

It is clear that user 1 can decode $M_1=\{M_1^1,M_1^2,M_1^3\}$, user 2 can decode $M_2=\{M_2^1,M_2^2,M_2^3\}$, user 3 can decode $M_3=\{M_3^1,M_3^2,M_3^3\}$, and user 4 can decode $M_4=\{M_4^1,M_4^2,M_4^3\}$. Also, by adding the first and second rows, user 5 can decode $M_1^1+M_2^1+M_3^1+M_4^1$, by adding the second and third rows user 5 can decode $M_1^2+M_2^2+M_3^2+M_4^2$, and by adding the third and fourth rows user 5 can decode $M_1^3+M_2^3+M_3^3+M_4^3$. Thus, user 5 can collectively arrive at $M_1+M_2+M_3+M_4$ which was desired. The whole task of sending this single row is fulfilled in $4 \times \frac{1}{3} \times \frac{1}{5}$, which is a $\frac{4}{3}$ multiplicative factor worse than the $L=4$ case. Thus, the total delivery time will be $T=\frac{4}{3}$, which matches the converse of $T \geq \frac{N-1}{L}=\frac{4}{3}$.

Also, for the uncoded scheme we will have
\begin{eqnarray}
T&=&\frac{K(1-M/N)}{L} \\ \nonumber
&=&\frac{8}{5}
\end{eqnarray}
\end{examp}

The following theorem characterizes the optimal delivery time for all $L<N-1$ if $N-1$ is a multiple of $L$. 
\begin{thm}
		Suppose $K=N$, $M=\frac{1}{N}$, and $L$ divides $N-1$. Then, the optimal delivery time is $T=\frac{N-1}{L}$.
\end{thm}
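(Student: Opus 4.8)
The plan is to reuse the structure of Theorem~1 almost verbatim: the converse is untouched, and only the delivery phase of Algorithm~\ref{Alg_Main} is modified, replacing each single zero-forcing block by $q:=\frac{N-1}{L}$ smaller blocks. For the converse, substituting $K=N$, $\lfloor N/K\rfloor=1$ and $M=\frac1N$ into Lemma~1 of \cite{Shariatpanahi2016} (with $s=K$) gives $T\ge\frac1L\big(N-1\big)=\frac{N-1}{L}$, so it remains to present a scheme attaining this value.

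For achievability I would keep the placement $Z_k=\sum_{n=1}^{N}W_n^k$ of Algorithm~\ref{Alg_Main}, so that, exactly as in the proof of Theorem~1, it suffices to deliver $W_{d_k}^i$ to user $k$ for every $i\ne k$ and to deliver $\sum_{n\ne d_k}W_n^k$ to user $k$ itself. Fix a row index $i\in[N]$. Since $L\mid N-1$, the set of private users $[K]\setminus\{i\}$ partitions into $q$ disjoint groups $G_1^{(i)},\dots,G_q^{(i)}$, each of size exactly $L$. For each group I would transmit the block
\[
\mathbf{X}_i^{(g)}=\sum_{k\in G_g^{(i)}}W_{d_k}^i\,\frac{\mathbf{w}_k^{G_g^{(i)}}}{\mathbf{h}_i^H\mathbf{w}_k^{G_g^{(i)}}},\qquad \text{where } \mathbf{h}_j^H\mathbf{w}_k^{G_g^{(i)}}=0 \text{ for all }j\in G_g^{(i)}\setminus\{k\},
\]
which exists generically because only $L-1$ nulling constraints are imposed on the $L$-dimensional beamformer $\mathbf{w}_k^{G_g^{(i)}}$. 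Repeating the scalar computations of the proof of Theorem~1 inside the group: each $k\in G_g^{(i)}$ receives a known nonzero multiple of $W_{d_k}^i$ (all competing beams vanish at $\mathbf{h}_k$), user $i$ receives the clean partial sum $\sum_{k\in G_g^{(i)}}W_{d_k}^i$ (no beam of this block is nulled at $\mathbf{h}_i$), and every other user receives an interference term that it discards — harmlessly, because the $Nq$ blocks occupy disjoint channel uses.

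It remains to check decoding and count the time. Summing over $g$ the $q$ partial sums that user $i$ obtains from the row-$i$ blocks yields $\sum_{g=1}^q\sum_{k\in G_g^{(i)}}W_{d_k}^i=\sum_{n\ne d_i}W_n^i$, because the $G_g^{(i)}$ partition $[K]\setminus\{i\}$ and the requests are distinct; subtracting this from $Z_i$ recovers the missing part $W_{d_i}^i$, while the remaining parts $W_{d_i}^j$ with $j\ne i$ reach user $i$ as private messages in rows $j\ne i$, so every request is served. Each block $\mathbf{X}_i^{(g)}$ carries $L$ parallel streams of size $\frac1N$ and thus occupies $\frac1N$ time slots; with $N$ rows and $q$ blocks per row the total is $T=Nq\cdot\frac1N=q=\frac{N-1}{L}$, which meets the converse.

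The single point where the hypothesis $L\mid N-1$ is essential, and the step I would scrutinize most, is the partition of each row's $N-1$ private users into groups of size \emph{exactly} $L$: this is what keeps every zero-forcing block ``full'' (no wasted spatial dimension) and lets the common sum-user reassemble $\sum_{n\ne d_k}W_n^k$ as a disjoint union of group sums. Without divisibility the leftover fewer-than-$L$ users force the finer mini-file splitting and telescoping seen in Examples~3 and~4, and the clean one-line count breaks down; handling that regime would be the real obstacle, but it lies outside the present statement.
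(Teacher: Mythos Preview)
Your proposal is correct and follows essentially the same approach as the paper: partition the $N-1$ ``private'' users of each row into groups of size $L$, zero-force within each group, and let the sum-user $i$ collect the $q$ clean partial sums (guaranteed by your normalization $\mathbf{h}_i^H\mathbf{w}_k^{G_g^{(i)}}$ in the denominator) and add them. The paper's proof is considerably terser---it just asserts the grouping and the time count without writing out the beamformers or verifying that user $i$ recovers the full sum---so your version is in fact a more explicit instantiation of the same idea.
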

\begin{proof}
	If we had $L=1$, then each row of the delivery table would take $(N-1) \times \frac{1}{N}$ time slots. With $L$ transmitters available, we can group the users which require independent messages in groups of size $L$ and use zero-forcing to remove intra-group interference. This will reduce the transmission of each to take $\frac{1}{L} \times (N-1) \times \frac{1}{N}$. Since we have a total of $N$ rows, the total time needed would be $T=\frac{N-1}{L}$. The converse argument is identical to that of Theorem 1 which shows the optimality of the scheme.
\end{proof}

The uncoded scheme arrives at the delivery time of
\begin{eqnarray}
T&=&\frac{K(1-M/N)}{L} \\ \nonumber
&=&\left(1+\frac{1}{N}\right) \left(\frac{N-1}{L}\right)
\end{eqnarray}
which is greater than our proposed scheme's delay.

\section{Conclusions}\label{Sec_Conclusions}
We have characterized the optimal delivery time of coded caching in multi-server networks in the low memory regime. Our achievable scheme includes caching coded content, and using zero forcing at the content delivery phase. Our converse matches the achievable scheme's performance which ensures its optimality. Also, we have compared the delivery time of our proposal with the conventional uncoded scheme, where every user caches a fraction of each file separately, and have shown our  proposal's superiority. The results can also be interpreted as DoF performance of multiple-antenna coded caching schemes, and cache-enabled interference channels where the transmitters play the role of a distributed MIMO transmitter.

\bibliographystyle{IEEEtran}

\end{document}